\newtheorem{theorem}{Theorem}
\newtheorem{lemma}{Lemma}
\newtheorem{proof}{Proof}[section]
\begin{document}
%
\title{Analysis on the Empirical Spectral Distribution of Large Sample Covariance Matrix and Applications for Large Antenna Array Processing}
%
%
%

\author{Guanping~Lu, ~\IEEEmembership{Member,~IEEE},\
Jinsong~Wu, ~\IEEEmembership{Senior Member,~IEEE},\
Robert~C.~Qiu,~\IEEEmembership{Fellow,~IEEE}\

\thanks{This work was supported by the National Natural Science Foundation of China under Grant no. 61401274,
G. P. Lu is with the School of Management, Fudan University, Shanghai, he is also with Shanghai Gold Exchange, Shanghai, China (e-mail: lugp@sge.com.cn).
Jinsong Wu is with Department of Electrical Engineering, Universidad de Chile, Chile.
Robert Qiu is with the Academy of Information Technology and Electronic Engineering, Shanghai Jiaotong University.

The authors would like to thank Xin Wang and three anonymous
referees for their comments and suggestions which have been helpful in improving the
quality of this paper.}}

\maketitle

\begin{abstract}

This paper addresses the asymptotic behavior of a particular type of information-plus-noise-type matrices, where the column and row number of the matrices are large and of the same order, while signals are diverged and time delays of the channel are fixed. We prove that the empirical spectral distribution (ESD) of the large dimension sample covariance matrix and a well-studied spiked central Wishart matrix converge to the same distribution. As an application, an asymptotic power function is presented for the general likelihood ratio statistics for testing the presence of signal in large array signal processing.

\end{abstract}

\begin{IEEEkeywords}
Large Antenna Array, MIMO, Detection, Random Matrices, LRT.
\end{IEEEkeywords}

%
\IEEEpeerreviewmaketitle

\section{Introduction}
The signal detection of information-plus-noise-type matrices is fundamental to modern wireless communication networks. Due to the growing scale of network and limited time resource, the available sample sizes cannot be quite larger than the dimensions. For this issue, the traditional covariance matrix theory is actually not applicable, which needs much larger sample sizes than signal dimensions\cite{ttony}. Thus, random matrix theory (RMT) has been used in resolving the high dimension estimation problems in signal processing\cite{Debbah}\cite{lin12}.

The work \cite{lin12}\cite{Tulino} studied the user detection and signal detection algorithms using large dimension signal-plus-noise matrices. It was shown that the random matrix theory on large dimension signal-plus-noise matrices could guide the large antenna array signal processing very well. On the other hand, the work \cite{silver} researched the empirical spectral distribution (ESD) convergence behavior of three large dimension sample covariance matrices (SCM) categories. Given a $P\times P$ Hermitian matrix $\mathbf{R}$, for any real $x$, the ESD, $F(x)$, is defined by
\begin{equation}
F(x) \leftarrow \frac{1}{P}\#\{\lambda_j:\lambda_j<x\}
\end{equation}
where $\#E$ denotes the cardinality of the set $E$, $\lambda_j$ is the $j$-th eigenvalue. \cite{wang14} and \cite{yao18} investigated the general likelihood ratio test (GLRT) for linear spectral statistics of the eigenvalues of high-dimensional SCM from Gaussian populations. There is a gap between the traditional signal detection model as \cite{Tulino} and state-of-art random matrices theory \cite{yao18}, which, hopefully, can be connected by our research as a bridge. 

In this work, a simple kind of information-plus-noise-type matrices with isotropic noise and limited signal dimension is studied. It is proved that the ESD of the received SCM converges in distribution to the ESD of central spike Wishart matrix\cite{baik}. Based on this feature, GLRT in \cite{wang14} has been used in hypothesis test of the signal detection for massive MIMO systems. In this paper, the GLRT tests were applied in time domain to support the hybrid transmission scheme such as filter bank multicarrier (FBMC) or filtered multitone (FMT) modulation \cite{guanpinglu}. Except the scenario in this paper, the results could be used widely from MIMO detection\cite{Debbah} to multiuser detection\cite{Tulino}. 

By proposed hypothesis test method, we evaluate the detection performance of large antenna array using the same number of antennas but the different number of samples. it is found that the simulation results highly agree with those from the theoretical analysis.

In this paper, $P$ is the antenna number and $N$ is the sample number which are $N$ and $n$ in \cite{silver} and Lemma 1, $s(n)$ is the n-th element of sequence $s$, $\mathbf{s}$ is the row vector, $[\mathbf{x}_1;\mathbf{x}_2;\cdots;\mathbf{x}_n]$ means constructing a matrix by $\mathbf{x}_l$ as rows, $diag(v)$ returns a square diagonal matrix with the elements of vector v on the main diagonal, $^*$ is the Hermitian transpose operator, $\mathbf{X}$ is the received signal matrix, $\mathbf{X}(i,l)$ is the $i,l$-th element of $\mathbf{X}$ which is $\mathbf{X}_{i,l}$ in \cite{silver} and Lemma 1, $N(0,1)$ is the Normal distribution, $\mathcal{CN}(0, 1)$ is the complex Normal distribution, $\mathbb{C}$ and $\mathbb{R}$ are complex number field and real number field, respectively.



\section{System Model}

In signal detection, SCM analysis is used to explore the fundamental limits of communication\cite{Tulino}. Consider that there is a single transmitter which sends the signal $s(n)$ to $P$ receivers (or antenna elements) in an $L$-length channel as in \cite{Debbah}. For single antenna element of $i$, the received signal $\mathbf{x}_i$ is the summation of the $L$ tap delayed transmitted signal vector $\mathbf{s}$ with propagation as $h_{i,l}$, where $l = 0,...,L-1$. Accordingly, $N$-length points are sampled from each antenna, which is written as
\begin{equation}\label{ana}
\mathbf{x}_i=\sum_{l=0}^{L-1}h_{i,l}\mathbf{s}_l+\mathbf{w}_i
\end{equation}
where $h$ is the concatenate factor of transmitter amplifier and channel, $ h_{i,l}\sim \mathcal{CN}(0, \sigma^2)$, $s(n)$ is the transmit signal and follows the Binomial distribution as $P(s(n)=K)=(1/2)^{1-(K+1)/2}(1/2)^{(1+K)/2}, K \in\{-1,1\}$ or Normal distribution as $N(0, 1)$. $\mathbf{w}_i$ is the noise vector at the receiver containing independent identically distributed (i.i.d.) complex entries and unit variance, $\sigma^2$ is the signal power on each taps. Suppose the receiver truncates length-$N$ signal for detection. The length of channel delay is not larger than $L$ samples. The $l-lag$ of the signal vector has elements as $\mathbf{s}_l=\{s(-l),\cdots,s(N-l+1)\}$.

The task is to detect whether the signal is presented by processing the SCM of the receiver. It turns to be a hypothesis testing problem, where $\mathcal{H}_0$ means that the signal does not exist, and $\mathcal{H}_1$ means that the signal exists. The received signal samples under the hypothesis test are given, respectively, as
\begin{equation}\label{disc0}
\begin{split}
&\mathcal{H}_0: \mathbf{x}_i=\mathbf{w}_i;\\
&\mathcal{H}_1: \mathbf{x}_i=\sum_{l=0}^{L-1}h_{i,l}\mathbf{s}_l+\mathbf{w}_i.
\end{split}
\end{equation}

Furthermore, it is assumed that the noise and channel propagation are uncorrelated. The complex Gaussian noise has the property as
\begin{equation}\label{disc}
\mathbf{E}(\mathbf{w}(i))=0;
\mathbf{E}(\mathbf{w}_i\mathbf{w}_{j(\neq i)}^*)=0;
\mathbf{E}(\mathbf{w}_i^*\mathbf{w}_i)=\mathbf{I}_{N};
\end{equation}
The channel propagation has the same property.

In large scale antenna systems, the signal detection could be based on the joint operation of the sampled signal from each antenna element, such as in \cite{Debbah}. The sample vector $\mathbf{x}_n$ from each antenna elements is formed by $N$ samples, and it could construct a matrix as
\begin{equation}\label{dh}
\mathbf{X}=[\mathbf{x}_1;\mathbf{x}_2;\cdots;\mathbf{x}_P]
\end{equation}
where $\mathbf{X}$ has a dimension of $P\times N$ with element as $\mathbf{X}(i,j)$ and $P\rightarrow\infty$,$N\rightarrow\infty$,$P/N\rightarrow c$. The maximum channel length $L=O(1)$ is a fixed value regardless the value of $N$. The SCM of \eqref{dh} is
\begin{equation}\label{suma}
\mathbf{R_x}=\mathbf{X}\mathbf{X}^{*}/N.
\end{equation}

Note $F(x)$ as the ESD of the SCM. Under hypothesis $H_0$, the SCM is not a good approximate of the covariance matrix \eqref{disc}. As stated by Machenko-Pastur theorem, almost surely
\begin{equation}
F'(x)=\frac{1}{2\pi xc}\sqrt{(b-x)(x-a)},a< x< b,
\end{equation}
and $a=(1-\sqrt{c})^2$, b=$(1+\sqrt{c})^2$ when $0< c\leq 1$. When $c>1$, there is an additional Dirac measure at $x=0$ of mass $1-\frac{1}{c}$.
The $F'(x)$ is also named as Machenko-Pastur (M-P) Law\cite{pastur99}. Its shape is demonstrated in Fig. 1, which has parameter as $P=256, c=1/2$ and $1/8$.

\begin{figure}\centering
	\includegraphics[scale=0.5]{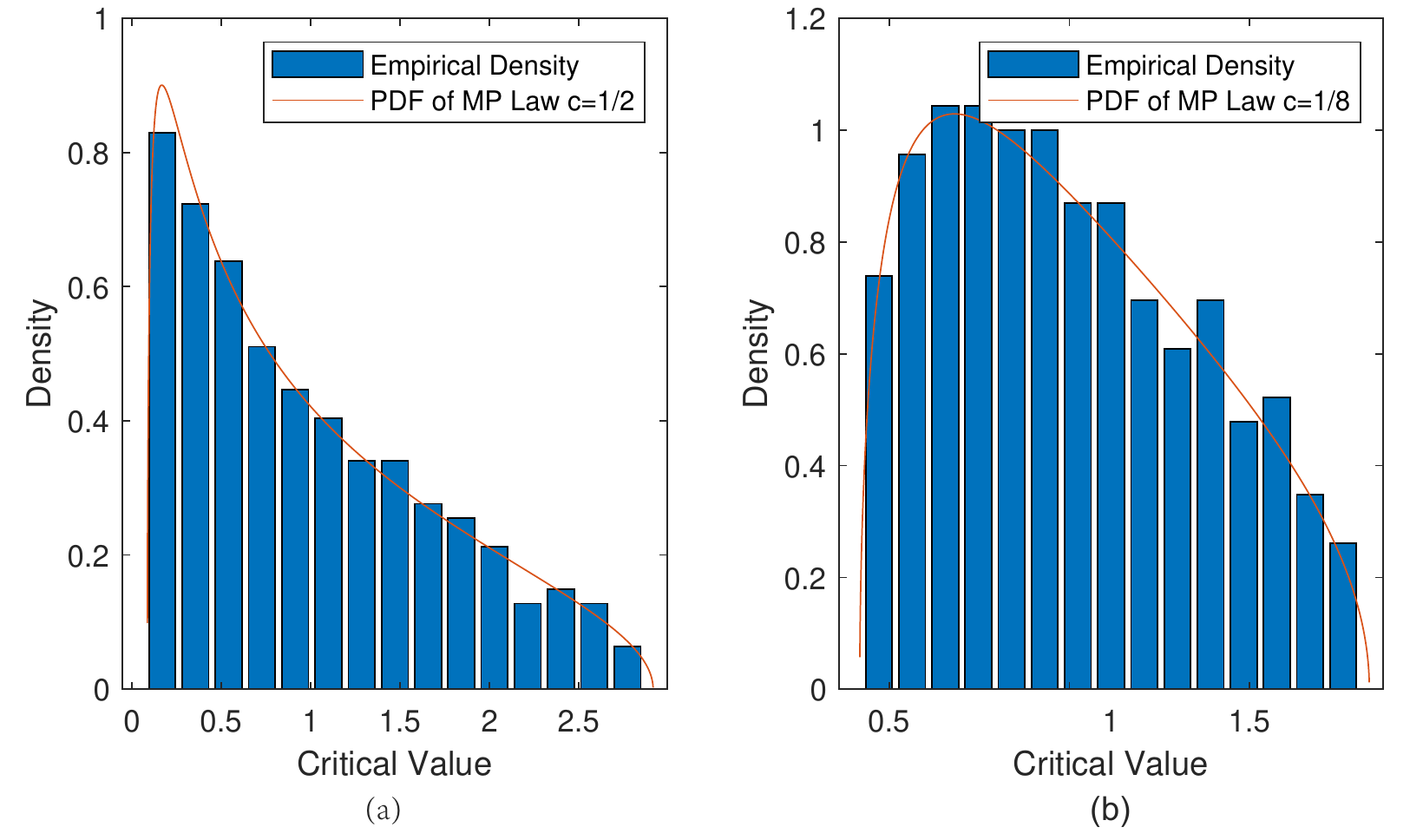}\\
	\caption{MP law under $P=256$, (a) $c=1/2$ and (b) $c=1/8$.}
\end{figure}

\section{Sample covariance Matrix Analysis}

Under hypothesis $\mathcal{H}_1$, the similar ESD analysis of the SCM has been proposed, such as in \cite{Debbah}. However, they did not provide the ESD when $L>1$, and the method is quite explicit\cite{yao18}. Here, we provide a theoretical result that the ESD of $H_1$ is the same as that of a central Wishart matrix, which has been studied extensively.

The received signal matrix $\mathbf{X}$ is further decomposed as $\mathbf{X}=\mathbf{S}_r+\mathbf{W}$, where $\mathbf{S}_r$ is the unknown signal matrix transmitted in $\mathbf{X}$, so
\begin{equation}\label{suma1}
\mathbf{R_x}=(\mathbf{S}_r\mathbf{S}_r^*+\mathbf{S}_r\mathbf{W}^*+\mathbf{W}\mathbf{S}_r^*+\mathbf{W}\mathbf{W}^*)/N,
\end{equation}
where $\mathbf{S}_r=\mathbf{H}\mathbf{S}_L$, $\mathbf{H}$ is a $P\times L$ matrix with elements as $h_{i,l}$ in \eqref{ana}, $\mathbf{S}_L=[\mathbf{s}_0 ; \cdots ;\mathbf{s}_{L-1}]$ has a size of $L\times N$, and the last part of $\mathbf{R_x}$ is the correlation of white Gaussian noise.

The SCM in \eqref{suma} is analyzed as
\begin{equation}\label{th1}
\begin{split}
&\mathbf{R}_x=\mathbf{T}+\mathbf{W}\mathbf{W}^*/N,
\end{split}
\end{equation}
where
\begin{equation}\label{th1LN}
\begin{split}
&\mathbf{T}=\mathbf{H}\mathbf{S}_L\mathbf{S}_L^*\mathbf{H}^*/+(\mathbf{H}\mathbf{S}_L\mathbf{W}^*+\mathbf{W}\mathbf{S}_L^*\mathbf{H}^*)/N\\
&=\mathbf{H}\mathbf{H}^*+(\mathbf{H}\mathbf{Q}+\mathbf{Q}^*\mathbf{H}^*)/\sqrt N),
\end{split}
\end{equation}
where $\mathbf{Q}=\mathbf{S}_L\mathbf{W}^*/\sqrt N$ is a random matrix with dimension $L\times P$. In \eqref{th1LN}, $\lim\limits_{n\rightarrow \infty}\mathbf{S}_L\mathbf{S}_L^*/N=\mathbf{I}_{N}$, and $\lim\limits_{n\rightarrow \infty}{\mathbf{H}\mathbf{S}_L\mathbf{S}_L^*\mathbf{H}^*/N}= \mathbf{H}\mathbf{H}^*$. 

For the convenience of portraying the matrix entires, it is assumed that $s(i)$ follows Binary distribution as $P(s(i)=k)=1/2,k\in\{-1,1\}$. Random variable $rv=\mathbf{S}_L(l,j) \mathbf{W}(j,p)$ follows $\mathcal{CN}(0,1)$, because if $s(n)$ and $w(n)$ follows Binary distribution and standard complex Normal distribution, 
\begin{equation}
\begin{split}
P\{s(n)w(n)>T\}&= \frac{P\{w(n)>T\}}{2}-\frac{P\{w(n)<-T\}}{2}\\
&=P\{w(n)>T\}
\end{split}
\end{equation}
 Entries $\mathbf{Q}(l,p)= \sum_{j=0}^{N-1}(\mathbf{S}_L(l,j) \mathbf{W}(j,p))/\sqrt{N}$ follows $\mathcal{CN}(0,1)$. Entries in $\mathbf{Q}^*$ have the same character.

If the signal follows Normal distribution, 
$\mathbf{Q}(l,p)$ converges to Normal distribution according to Law of Large Number. Simulations show that the Normal distributed signal has the same character as Binary distributed ones but with a slower convergence speed.

\begin{lemma}[perturbation on SCM\cite{silver}, Theorem 1.1]

For $N=1,2,\cdots$, $\mathbf{X}_N = \frac{1}{\sqrt{N}}(X_{i,j}^N)$, $N\times n$, $X_{i,j}^N \in \mathbb{C}$, identically distributed, independent across $n,i,j$ for each $N$, $E|X_{11}^1 - EX_{11}^1 |^2 = 1$, $n/N \rightarrow c > 0$ as $N \rightarrow \infty$.\\
a)	$T_N=diag(t_1^n,\cdots,t_n^n)$, $t_i^n \in \mathbb{R}$ and the distribution function of $\{t_1^n,\cdots,t_n^n\}$ converges almost surely in distribution to a probability distribution function (PDF) $H$ as $N \rightarrow \infty$.\\
b)  $A_N$ is Hermitian $N \times N$ for which the ESD $F(A_N)$ converges vaguely to $\mathcal{A}$ almost surely, $\mathcal{A}$ being a nonrandom distribution function.\\
c)	$X_N$, $T_N$, and $A_N$ are independent.\\
Let $B_N = A_N + X_NT_NX_N^*$. Then, almost surely, $F(B_N)$, the ESD of $B_N$, converges vaguely, as $N\rightarrow \infty$, to a (non-random) distribution function $\hat{F}$, whose Stieltjes transform $m(z) (z \in \mathbb{C}^+)$ satisfies
\begin{equation}\label{perturb}
m=m_\mathcal{A}(z-c\int \frac{t}{1+tm}dH(t))
\end{equation}  
\end{lemma}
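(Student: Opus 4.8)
The statement is the generalized Marchenko--Pastur/Silverstein law, and the plan is to prove it through Stieltjes transforms and the leave-one-out resolvent method. Write $m_N(z)=\frac1N\operatorname{tr}(B_N-zI)^{-1}$ for the Stieltjes transform of $F(B_N)$; since vague convergence of distribution functions on $\mathbb{R}$ is equivalent to pointwise convergence of Stieltjes transforms on $\mathbb{C}^+$, it suffices to show that for each fixed $z$ with $\Im z>0$ one has $m_N(z)\to m(z)$ almost surely, where $m(z)$ is the unique solution of \eqref{perturb} in the relevant analytic class. Before anything else I would carry out the usual reductions: truncate the entries $X_{i,j}$ at level $\varepsilon_N\sqrt N$, recenter, and rescale to unit variance; the rank and norm inequalities for ESDs of perturbed Hermitian matrices show these operations leave the limit unchanged, so one may assume the $X_{i,j}$ are bounded, centered, with variance $1/N$.

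Next I would split the randomness into concentration plus a deterministic limit. Ordering the columns $x_1,\dots,x_n$ of $X_N$ and letting $\mathbb{E}_k$ be conditional expectation given $x_1,\dots,x_k$ (and $A_N,T_N$), the telescoping sum $m_N(z)-\mathbb{E}m_N(z)=\sum_{k=1}^{n}(\mathbb{E}_k-\mathbb{E}_{k-1})m_N(z)$ is a martingale difference sum whose increments are $O\big(1/(N|\Im z|)\big)$ by the rank-one resolvent update formula; Burkholder's inequality then gives $\mathbb{E}|m_N(z)-\mathbb{E}m_N(z)|^4=O(N^{-2})$, so Borel--Cantelli yields $m_N(z)-\mathbb{E}m_N(z)\to 0$ almost surely, and it remains only to identify $\lim\mathbb{E}m_N(z)$.

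For the deterministic limit I would set $R=(B_N-zI)^{-1}$ and $R_{(j)}=(B_N-t_j x_jx_j^*-zI)^{-1}$, and start from the identity $I+zR=RA_N+\sum_j t_j Rx_jx_j^*$. Taking normalized traces and applying Sherman--Morrison, $x_j^*Rx_j=x_j^*R_{(j)}x_j/(1+t_j x_j^*R_{(j)}x_j)$, produces a relation involving the scalars $x_j^*R_{(j)}x_j$; since $x_j$ is independent of $R_{(j)}$ and has i.i.d.\ variance-$1/N$ entries, a quadratic-form concentration bound replaces each by $\frac1N\operatorname{tr}R_{(j)}=m_N+O(1/N)$, uniformly in $j$. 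Introducing the deterministic shift $\omega_N=z-\frac nN\int\frac{t}{1+t m_N}\,dH_N(t)$ and running the same replacement on the two-sided identity after multiplying on the right by $(A_N-\omega_N I)^{-1}$, one gets $\frac1N\operatorname{tr}\big(R-(A_N-\omega_N I)^{-1}\big)\to 0$; taking traces gives $m_N\approx\frac1N\operatorname{tr}(A_N-\omega_N I)^{-1}\to m_{\mathcal{A}}(\omega)$, which is precisely \eqref{perturb}. A final uniqueness/stability argument for \eqref{perturb} (two solutions in $\mathbb{C}^+$ must coincide, using analyticity and the $\mathbb{C}^+\!\to\!\mathbb{C}^+$ mapping property, and the solution is the Stieltjes transform of an honest distribution function) shows the limiting equation pins down $m(z)$, so every subsequential limit of $m_N(z)$ equals $m(z)$ and the proof closes.

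I expect the main obstacle to be the presence of the general Hermitian term $A_N$. When $A_N=0$ one can pass to the companion matrix $T_N^{1/2}X_N^*X_NT_N^{1/2}$ and the self-consistent equation comes out cleanly; with $A_N$ present the deterministic equivalent has to be extracted from the more delicate two-sided resolvent manipulation above, and one must check that all the error terms --- from $\operatorname{tr}R-\operatorname{tr}R_{(j)}$, from the concentration of $x_j^*R_{(j)}x_j$, and from $F(A_N)\Rightarrow\mathcal{A}$ together with $H_N\Rightarrow H$ --- vanish uniformly. Establishing the uniqueness and stability of the solution of \eqref{perturb} strongly enough to upgrade the conclusion from convergence in probability along subsequences to almost sure vague convergence of $F(B_N)$ is the other technical heart of the argument.
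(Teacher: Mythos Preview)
The paper does not actually prove this lemma: it is quoted verbatim as Theorem~1.1 from the cited reference \cite{silver} and used as a black box to drive Lemma~2 and Theorem~1. So there is no ``paper's own proof'' to compare your proposal against.

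That said, your proposal is a faithful outline of the standard Silverstein--Bai proof of this result. The sequence you describe --- truncation/centering, martingale concentration via the leave-one-column decomposition and Burkholder's inequality, the Sherman--Morrison rank-one resolvent identity, replacement of the quadratic forms $x_j^*R_{(j)}x_j$ by $\tfrac1N\operatorname{tr}R_{(j)}$ through a concentration-of-quadratic-forms lemma, and then extraction of the self-consistent equation followed by a uniqueness argument --- is exactly how the original reference establishes the theorem. You have correctly identified that the presence of the additive Hermitian term $A_N$ is what distinguishes this from the plain Marchenko--Pastur situation and forces the more delicate two-sided resolvent manipulation rather than passage to the companion matrix. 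Nothing is missing at the level of a plan; the only substantive work left is the bookkeeping that all error terms are $o(1)$ uniformly, which is routine once $\Im z>0$ is fixed.
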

According to lemma 1, the ESD of $B_n$ is determined by the ESD of $A_N$ and $T_N$. Then an ideal matrix could be constructed, which has the similar ESD as \eqref{th1LN} :

\begin{lemma}
When $P\rightarrow \infty, N\rightarrow \infty, P/N\rightarrow c$, for
\begin{equation}\label{RA}
\begin{split}
&\hat{\mathbf{R}}_x=(\hat{\mathbf{H}}+\hat{\mathbf{W}})(\hat{\mathbf{H}}+\hat{\mathbf{W}})^*/N\\
&\hat{\mathbf{H}}=\sqrt{N}\mathbf{H}\mathbf{E}_L\in \mathbb{C}^{P\times N},
\end{split}
\end{equation}
where $\mathbf{E}_L=(\mathbf{I}_L \quad \mathbf{0}_{L\times(N-L)})$ as a select matrix, $\hat{\mathbf{W}}=(\mathbf{Q}^* \ \mathbf{\tilde{W}}_{P\times ({N-L})})$, where $\mathbf{\tilde{W}}_{P\times ({N-L})}$ is the last $N-L$ columns of $\mathbf{W}$. Then the ESD of \eqref{RA} and \eqref{th1LN} converge to the same PDF.
\end{lemma}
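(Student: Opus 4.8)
\emph{Proof strategy.} The plan is to show that $\hat{\mathbf{R}}_x$ and the matrix $\mathbf{R}_x=\mathbf{T}+\mathbf{W}\mathbf{W}^{*}/N$ of \eqref{th1}--\eqref{th1LN} differ by a Hermitian matrix of rank $O(L)$, and then to invoke the rank inequality for empirical spectral distributions: for any Hermitian $P\times P$ matrices $\mathbf{A},\mathbf{B}$ one has $\sup_{x\in\mathbb{R}}\bigl|F_{\mathbf{A}}(x)-F_{\mathbf{B}}(x)\bigr|\le \mathrm{rank}(\mathbf{A}-\mathbf{B})/P$, where $F_{\mathbf{A}}$ denotes the ESD of $\mathbf{A}$ regarded as a CDF. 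Because $L=O(1)$ while $P\to\infty$, the right-hand side vanishes, so the two ESDs have the same almost-sure limit. The one delicate point is that the difference is \emph{not} small in operator norm (for instance $\|\mathbf{H}\mathbf{H}^{*}\|$ and $\|\mathbf{Q}^{*}\mathbf{Q}/N\|$ are of order $P/N$ or larger), so the comparison must be carried out through the \emph{rank}; it is precisely the fixed-delay hypothesis $L=O(1)$ that makes the rank bound effective.

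\emph{Step 1 (expand $\hat{\mathbf{R}}_x$).} First I would expand the product in \eqref{RA} using the structure of the selection matrix. From $\mathbf{E}_L\mathbf{E}_L^{*}=\mathbf{I}_L$ one gets $\hat{\mathbf{H}}\hat{\mathbf{H}}^{*}/N=\mathbf{H}\mathbf{H}^{*}$; since the first $L$ rows of $\hat{\mathbf{W}}^{*}$ are exactly $\mathbf{Q}$, one has $\mathbf{E}_L\hat{\mathbf{W}}^{*}=\mathbf{Q}$, hence $\hat{\mathbf{H}}\hat{\mathbf{W}}^{*}/N=\mathbf{H}\mathbf{Q}/\sqrt{N}$ and, by conjugation, $\hat{\mathbf{W}}\hat{\mathbf{H}}^{*}/N=\mathbf{Q}^{*}\mathbf{H}^{*}/\sqrt{N}$; finally $\hat{\mathbf{W}}\hat{\mathbf{W}}^{*}=\mathbf{Q}^{*}\mathbf{Q}+\tilde{\mathbf{W}}\tilde{\mathbf{W}}^{*}$, where $\tilde{\mathbf{W}}$ is the last $N-L$ columns of $\mathbf{W}$ (the noise block inside $\hat{\mathbf{W}}$ in \eqref{RA}). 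Writing $\mathbf{W}_{0}$ for the first $L$ columns of $\mathbf{W}$, so that $\tilde{\mathbf{W}}\tilde{\mathbf{W}}^{*}=\mathbf{W}\mathbf{W}^{*}-\mathbf{W}_{0}\mathbf{W}_{0}^{*}$, and collecting terms,
\[
\hat{\mathbf{R}}_x=\Bigl(\mathbf{H}\mathbf{H}^{*}+\tfrac{\mathbf{H}\mathbf{Q}+\mathbf{Q}^{*}\mathbf{H}^{*}}{\sqrt{N}}+\tfrac{\mathbf{W}\mathbf{W}^{*}}{N}\Bigr)+\frac{\mathbf{Q}^{*}\mathbf{Q}-\mathbf{W}_{0}\mathbf{W}_{0}^{*}}{N}=\mathbf{R}_x+\boldsymbol{\Delta}.
\]
Since $\mathbf{Q}$ is $L\times P$ and $\mathbf{W}_{0}$ is $P\times L$, the matrices $\mathbf{Q}^{*}\mathbf{Q}$ and $\mathbf{W}_{0}\mathbf{W}_{0}^{*}$ each have rank at most $L$, so $\mathrm{rank}(\boldsymbol{\Delta})\le 2L$. (If one keeps the pre-limit term $\mathbf{H}\mathbf{S}_L\mathbf{S}_L^{*}\mathbf{H}^{*}/N$ in $\mathbf{T}$ rather than its limit $\mathbf{H}\mathbf{H}^{*}$, the extra discrepancy $\mathbf{H}(\mathbf{S}_L\mathbf{S}_L^{*}/N-\mathbf{I}_L)\mathbf{H}^{*}$ likewise has rank at most $L$ and is absorbed into $\boldsymbol{\Delta}$.)

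\emph{Step 2 (conclude).} The rank inequality then yields $\sup_{x\in\mathbb{R}}\bigl|F_{\hat{\mathbf{R}}_x}(x)-F_{\mathbf{R}_x}(x)\bigr|\le \mathrm{rank}(\boldsymbol{\Delta})/P\le 2L/P\to 0$. Moreover $\mathbf{R}_x$ is itself a rank-$O(L)$ perturbation of the pure-noise sample covariance $\tilde{\mathbf{W}}\tilde{\mathbf{W}}^{*}/N$ (it differs from it by $\mathbf{T}+\mathbf{W}_{0}\mathbf{W}_{0}^{*}/N$, of rank $O(L)$), and $\tilde{\mathbf{W}}\tilde{\mathbf{W}}^{*}/N$ has an ESD converging almost surely to the Marchenko--Pastur law (the $A_N=0$, $T_N=I$ case of Lemma~1); so the same inequality shows that $F_{\mathbf{R}_x}$, and therefore $F_{\hat{\mathbf{R}}_x}$, converges to that common limit, which is the claim. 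I expect the only real work to be the matrix bookkeeping of Step~1; there is no genuine analytic difficulty, and the distributional fine print about $\mathbf{Q}$ (Binary versus Gaussian transmit symbols, only asymptotic Gaussianity of its entries $\mathbf{Q}(l,p)$) is irrelevant, since the rank bound on $\boldsymbol{\Delta}$ holds for every realization.
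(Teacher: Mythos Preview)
Your argument is correct and takes a different, more direct route than the paper. The paper expands $\hat{\mathbf{R}}_x=\hat{\mathbf{T}}+\hat{\mathbf{W}}\hat{\mathbf{W}}^{*}/N$, verifies that $\hat{\mathbf{T}}=\mathbf{T}$ entrywise, observes that both $\mathbf{W}\mathbf{W}^{*}/N$ and $\hat{\mathbf{W}}\hat{\mathbf{W}}^{*}/N$ have ESDs tending to the Marchenko--Pastur law, and then appeals to Lemma~1 (the Silverstein result for $B_N=A_N+X_NT_NX_N^{*}$) with $A_N=\mathbf{T}$ and $T_N=\mathbf{I}$ to conclude that the two sums share the same limiting ESD. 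You instead compute the full difference $\hat{\mathbf{R}}_x-\mathbf{R}_x=(\mathbf{Q}^{*}\mathbf{Q}-\mathbf{W}_0\mathbf{W}_0^{*})/N$, bound its rank by $2L$, and invoke the rank inequality for ESDs. Your route is more elementary (no Stieltjes transforms) and holds realization by realization; it also sidesteps a point the paper's proof glosses over, namely that hypothesis~(c) of Lemma~1 demands $A_N$ and $X_N$ be independent, whereas here $\mathbf{T}$ (and $\hat{\mathbf{T}}$) depends on the noise through $\mathbf{Q}=\mathbf{S}_L\mathbf{W}^{*}/\sqrt{N}$. The paper's approach, on the other hand, makes the connection to the Stieltjes-transform identity \eqref{perturb} explicit, which is closer in spirit to how the limit is used downstream; but since $A_N$ here has finite rank and hence $F(A_N)\to\delta_0$, the limit is simply Marchenko--Pastur either way, exactly as your Step~2 recovers.
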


\begin{proof}
 Considering \eqref{RA}, we have
\begin{equation}\label{mspike}
\begin{split}
&\hat{\mathbf{R}}_x=\hat{\mathbf{T}}+\hat{\mathbf{W}}\hat{\mathbf{W}}^*/N
\end{split}
\end{equation}
where
\begin{equation}\label{th2LN}
\begin{split}
&\hat{\mathbf{T}}=N\mathbf{H}\mathbf{H}^*/N+(\mathbf{HE}_L\hat{\mathbf{W}}^*+\hat{\mathbf{W}}\mathbf{E}_L^*\mathbf{H}^*)/\sqrt{N}\\
&=\mathbf{H}\mathbf{H}^*+(\mathbf{H}\mathbf{Q}+\mathbf{Q}^*\mathbf{H}^*)/\sqrt{N}.
\end{split}
\end{equation} 

Thus the entries of $\mathbf{T}$ and $\hat{\mathbf{T}}$ is the same. Obviously the ESD of $\mathbf{T}$ and $\hat{\mathbf{T}}$ is the same, so the eigenvalue of $\mathbf{T}$ and $\hat{\mathbf{T}}$ is the same. On the other hand, according to Theorem 1, the ESD of $\mathbf{W}\mathbf{W}^*/N$ and $\hat{\mathbf{W}}\hat{\mathbf{W}}^*/N$ converges to M-P law. Using Lemma 1, entries in $\mathbf{R}_x$ and $\mathbf{\hat{R}}_x$ converges to the same PDF. 
\end{proof}

In lemma 2, an artificial $\mathbf{\hat{R}}_x$ is constructed whose ESD converges to the same distribution as $\mathbf{R}_x$. Though $\mathbf{\hat{R}}_x$ is an artificial matrix which is impossible to obtain, it could be substituted by central spiked Wishart matrices based on convergence in distribution, which is:

\begin{theorem}
The ESD of SCM \eqref{suma} and a central Wishart matrix $\mathbf{R_z}$ converge to the same PDF, where
\begin{equation}\label{wishart}
\mathbf{R_z}=\frac{1}{N}\mathbf{Z}\mathbf{\Sigma_N}\mathbf{Z}^*,
\end{equation}
 and  $\mathbf{\Sigma_N}=\rm{diag}(\underbrace{N\sigma^2+1,\cdots,N\sigma^2+1}_L,\underbrace{1,\cdots,1}_{N-L})$, $\mathbf{Z}(i,j)\sim \mathcal{CN}(0, 1)$, $i\in [0,P-1]$, $j\in [0,N-1]$, $P/N\rightarrow c$, $\sigma^2 \in O(1/N)$.
\end{theorem}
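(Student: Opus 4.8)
The plan is to push the comparison from $\mathbf R_x$ to the artificial matrix $\hat{\mathbf R}_x$ of Lemma~2 and then to $\mathbf R_z$, showing that each link has the Marchenko--Pastur (M--P) law of Section~II as its limiting ESD, so that the $L=O(1)$ ``signal'' columns never affect the bulk. By Lemma~2 the ESDs of $\mathbf R_x$ and of $\hat{\mathbf R}_x=\frac1N\mathbf G\mathbf G^*$, with $\mathbf G:=\hat{\mathbf H}+\hat{\mathbf W}\in\mathbb C^{P\times N}$, converge to a common limit, so it is enough to identify the limits of $F(\hat{\mathbf R}_x)$ and of $F(\mathbf R_z)$.

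For $\mathbf R_z=\frac1N\mathbf Z\mathbf{\Sigma}_N\mathbf Z^*$ this follows directly from Lemma~1 with (in its notation) $A_N=\mathbf 0$, $X_N=\mathbf Z/\sqrt N$ and $T_N=\mathbf{\Sigma}_N$. The empirical distribution of the diagonal of $\mathbf{\Sigma}_N$ puts mass $L/N\to0$ at $N\sigma^2+1$ (which stays bounded, since $\sigma^2\in O(1/N)$) and mass $(N-L)/N\to1$ at $1$, hence converges to the point mass $\delta_1$; substituting $H=\delta_1$ and $m_{\mathcal A}(z)=-1/z$ into \eqref{perturb} reduces it to the defining equation of the M--P law, so $F(\mathbf R_z)$ converges to the M--P law with ratio $c$.

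For $\hat{\mathbf R}_x$ I would combine the column structure of $\mathbf G$ with the rank inequality $\lVert F^{A+B}-F^A\rVert_\infty\le\operatorname{rank}(B)/P$. The last $N-L$ columns of $\mathbf G$ are exactly the i.i.d.\ $\mathcal{CN}(0,\mathbf I_P)$ columns of $\tilde{\mathbf W}_{P\times(N-L)}$, while the $l$-th column ($l\le L$) equals $\sqrt N\,\mathbf h_l$ plus the $l$-th column of $\mathbf Q^*$, with $\sqrt N\,\mathbf h_l\sim\mathcal{CN}(0,N\sigma^2\mathbf I_P)$ and the columns of $\mathbf Q^*$ having $\mathcal{CN}(0,1)$ entries (exactly for Binary signals, asymptotically via the CLT for Gaussian ones); because $\sigma^2\in O(1/N)$ the squared norms of these $L$ columns are $O(P)$. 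Writing $\mathbf G=(\mathbf G_1\ \mathbf G_2)$ with $\mathbf G_1\in\mathbb C^{P\times L}$ and $\mathbf G_2=\tilde{\mathbf W}_{P\times(N-L)}$, we obtain $\hat{\mathbf R}_x=\frac1N\mathbf G_2\mathbf G_2^*+\frac1N\mathbf G_1\mathbf G_1^*$, whose second summand has rank $\le L=O(1)$ and bounded operator norm; hence, by the rank inequality, $F(\hat{\mathbf R}_x)$ has the same limit as $F(\frac1N\tilde{\mathbf W}\tilde{\mathbf W}^*)$, which is the M--P law with ratio $c$ (the M--P law of Section~II, noting $P/(N-L)\to c$ and $\frac{N-L}{N}\to1$). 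Therefore $F(\hat{\mathbf R}_x)$ and $F(\mathbf R_z)$ have the same limit, and one more application of Lemma~2 finishes the proof.

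The main obstacle is making rigorous the informal passage ``$\mathbf G$ is, in distribution, $\mathbf Z\mathbf{\Sigma}_N^{1/2}$'': one must control the residual dependence between $\mathbf Q^*$ and $\tilde{\mathbf W}$ (both are linear images of the same noise $\mathbf W$) and, for Binary signals, the fact that $\mathbf S_L\mathbf S_L^*/N\to\mathbf I_N$ only asymptotically, so the columns of $\mathbf Q^*$ are only asymptotically white and unit-variance. Since all the offending covariances are $O(N^{-1/2})$ and involve just $L=O(1)$ columns, this discrepancy can be absorbed into the bounded-norm, rank-$O(1)$ perturbation that the rank inequality already discards, or handled by a Lindeberg/universality argument restricted to those columns. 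I would also note that the bulk statement alone does not require $\sigma^2\in O(1/N)$; that hypothesis is what keeps $N\sigma^2+1$ a finite spike, so that $\mathbf R_z$ reproduces the outlier eigenvalue of $\mathbf R_x$ as well --- which is the property the GLRT analysis of the next section relies on.
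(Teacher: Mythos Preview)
Your argument is correct and takes a genuinely different route from the paper's proof. The paper also starts from Lemma~2, but then argues that $\hat{\mathbf X}=\hat{\mathbf H}+\hat{\mathbf W}$ and $\mathbf Z\mathbf{\Sigma}_N^{1/2}$ have, column by column, the same complex Gaussian marginals (variance $(N\sigma^2+1)/N$ on the first $L$ columns, $1/N$ on the rest) and then appeals to Theorem~2 of \cite{silver} to conclude that the two sample covariance matrices share the same limiting ESD. You instead show that both $\hat{\mathbf R}_x$ and $\mathbf R_z$ have the Marchenko--Pastur law as their bulk limit: for $\mathbf R_z$ by plugging $H=\delta_1$ into Lemma~1, and for $\hat{\mathbf R}_x$ by stripping the first $L$ columns of $\mathbf G$ via the rank inequality. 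The advantage of your approach is robustness: it never needs the claim that $\hat{\mathbf X}$ has independent $\mathcal{CN}$ entries, so the dependence you worry about between $\mathbf Q^*$ and $\tilde{\mathbf W}$ (both built from the same noise $\mathbf W$) is actually a non-issue in your proof---the rank inequality discards $\frac1N\mathbf G_1\mathbf G_1^*$ regardless of its joint law with $\mathbf G_2$. The paper's approach, by contrast, is morally a distributional identification that would be stronger if it were exact, and it is the natural lead-in to the spike matching done in the paper's Theorem~2; your closing remark that $\sigma^2\in O(1/N)$ is needed only for the outliers, not for the bulk ESD statement, is a useful sharpening.
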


\begin{proof}
Let $\hat{\mathbf{X}}=\mathbf{\hat{H}}+\hat{\mathbf{W}}$. The elements of the first $L$ columns are $(\hat{\mathbf{H}}(i,j)+\hat{\mathbf{W}}(i,j))/\sqrt{N}$ which follows $\mathcal{CN}(0,(N\sigma^2+1)/N)$.  
 
For random matrix $\mathbf{R}_z=\frac{1}{\sqrt{N}}\mathbf{Z}{\mathbf{\Sigma}_N}^{1/2}(\frac{1}{\sqrt{N}}\mathbf{Z}{\mathbf{\Sigma}_N}^{1/2})^*$, the entires of first $L$ columns of $\frac{1}{\sqrt{N}}\mathbf{Z}{\mathbf{\Sigma}_N}^{1/2}$ follow the distribution $\mathcal{CN}(0,(N\sigma^2+1)/N)$, which is the same as $\hat{\mathbf{X}}$.
 
The entries of $\hat{\mathbf{X}}/\sqrt{N}$ and $\mathbf{Z}{\mathbf{\Sigma}_N}^{1/2}/\sqrt{N}$ have the same distribution, and the ESD of both matrices fits theorem 2 of \cite{silver}. So the ESD of $\hat{\mathbf{X}}\hat{\mathbf{X}}^*/N$ and $\mathbf{R}_z$ converge to the same PDF. According to Lemma 2, the ESD of $\mathbf{R}_x$ and $\mathbf{R}_z$ converge to the same PDF.
 \end{proof}

$\mathbf{R_z}$ is a central Wishart matrix, with the covariance matrix $\mathbf{\Sigma_N}$. This spike Wishart matrix has been widely studied\cite{wang14}, which could be used to design hypothesis detector in wireless communication.

\section{Application: Function of detection}

According to Lemma 1, the signal detection in large array is translated to a standard high dimensional signal detection problem. Classical methods include detecting the ratio of biggest and smallest eigenvalue, detecting the trace of covariance matrix\cite{lin12}, and LRT\cite{Debbah}. Among them, LRT has a decent history and has been widely used. 
 
According to chapter $10$ of \cite{Anderson}, we assume that $\mathbf{R}_n$ is the SCM formatted by \eqref{suma}, to test the hypothesis $\mathcal{H}_0: \mathbf{\Sigma}=\mathbf{I}$, where $\mathbf{\Sigma}$ is the covariance matrix of a vector $\mathbf{X}$ distributed according to $N(\mu,\Sigma)$. It is showed that the LRT
 \begin{equation}\label{GLRTL}
 D_n=\rm{tr}{{\bf{R}}_n}-{\rm{log}}({\rm{det}}{\mathbf{R}}_n)-P
 \end{equation}
 is unbiased LRT of \eqref{disc}. The detector is researched widely such as \cite{wang14}.

To make statistical integration about a parameter $\theta = \int f(x)dF^M(x)$. It is natural to use the estimator
\begin{equation}\label{GLRTL1}
\hat\theta_n = \int f_n(x)dF_n^M(x) = 1/P\sum_{i=1}^Pf_n(\lambda_i^M)
\end{equation}

The log-liklihood ratio (LLR) could be further rewritten as
\begin{equation}\label{LRTS}
D_n = \sum_{i=0}^{P-1}\lambda_{n,i}-{\rm{log}}\prod_{i=1}^P\lambda_{n,i}-P=\sum_{i=1}^P (\lambda_{n,i}-{\rm{log}}\lambda_{n,i}-1)
\end{equation}
where $\lambda_i$ is the $i$-th eigenvalue of tested covariance matrix.

The ESD of $\mathbf{R}_z$ has been rarely studied. The work \cite{wang14} investigated the fluctuation of linear spectral statistics of form \eqref{GLRTL}, with the form $\mathbf{R}_n=\mathbf{Z}^*\mathbf{\Sigma}_P\mathbf{Z}/N$ or $\mathbf{\underline{R}}_n=\mathbf{\Sigma}_P^{1/2}\mathbf{Z}\mathbf{Z}^*\mathbf{\Sigma}_P^{1/2}/N$ and its results are based on $P/N<1$, which is different from $\mathbf{R}_z$.

Under $P/N\geq1$, it has
 \begin{equation}\label{spike4}
\mathbf{R_z}=\frac{1}{N}\mathbf{Z}\mathbf{\Sigma_N}\mathbf{Z}^*=\frac{1}{N}(\mathbf{Z^*})^*\mathbf{\Sigma_N}\mathbf{Z}^*
\end{equation}
Substituting $\mathbf{Z}$ by $\mathbf{Z}^*$, $\mathbf{R_z}$ could fit the structure of $\mathbf{R}_n$ in \cite{wang14}, and the results from \cite{wang14} could be used directly. But in practice, the number of samples is larger than the antenna number in most scenario, which means $0<c<1$ is more practical.  

Under $0<c<1$ and $\mathcal{H}_0$, $\mathbf{R}_z$ equals $\mathbf{R_{w,w}}$ and $D_n$ converges to normal distribution as shown in \cite{wang14}, which is
\begin{equation}
D\rightarrow N(\mu_{D,H_0},\sigma_{D,H_0}^2)
\end{equation}
where
\begin{equation}\label{h0para}
\begin{split}
&\mu_{D,{H_0}}=P(1-\frac{c-1}{c}{\rm{log}}(1-c))-\frac{{\rm{log}}(1-c)}{2}\\
&\sigma_{D,H_0}^2=-2{\rm{log}}(1-c)-2c
\end{split}
\end{equation}

Under $0<c<1$ and $\mathcal{H}_1$, it has

$Theorem \:2$: The ESD of
$\mathbf{R}_z=\mathbf{Z\Sigma_N Z}^*/N$ and $\tilde{\mathbf{R}}_z=\frac{1}{N}\mathbf{\Sigma_P}^{1/2}\mathbf{Z}\mathbf{Z}^*\mathbf{\Sigma_P}^{1/2}$converge to the same PDF, where $\mathbf{\Sigma_P}=\rm{diag}(\underbrace{P\sigma^2+1,P\sigma^2+1,\cdots,P\sigma^2+1}_L, \underbrace{1,\cdots,1}_{P-L})$.

\begin{proof}
According to classical matrix theory, the non-zero eigenvalues of $\mathbf{R}_z$ and $\mathbf{\hat{R}_z}= \frac{1}{N}\mathbf{\Sigma_N}^{1/2}\mathbf{Z}^*\mathbf{Z}\mathbf{\Sigma_N}^{1/2}$ are the same.

As introduced in \cite{baik}, the outlier eigenvalues of $\hat{\mathbf{R}}_z$ and $\tilde{\mathbf{R}}_z$ could be elaborated by $\mathbf{\Sigma}_P$ and $\mathbf{\Sigma}_N$.

For $\tilde{\mathbf{R}}_z$, according to Theorem 1.1 of \cite{baik}, it has
\begin{equation}\label{eofR}
\lambda_{\tilde{\mathbf{R}}_z,l}\rightarrow\lambda_{\mathbf{\Sigma}_P,l}+\cfrac{c\lambda_{\mathbf{\Sigma}_P,l}}{\lambda_{\mathbf{\Sigma}_P,l}-1}, 0\leq l\leq L-1
\end{equation}
For $\hat{\mathbf{R}}_z$, using the same method as in \eqref{spike4}, it is further written as
\begin{equation}\label{hatR}
\hat{\mathbf{R}}_z=c\times\frac{1}{P}\mathbf{\Sigma_N}^{1/2}\mathbf{Z^*}(\mathbf{Z}^*)^*\mathbf{\Sigma_N}^{1/2}
\end{equation}
According to Theorem 1.2 in \cite{baik}, the first $L$ outlier eigenvalues hold
\begin{equation}\label{eoftR}
\lambda_{\hat{\mathbf{R}}_z,l}\rightarrow\lambda_{\mathbf{\Sigma}_N,l}\times c+\cfrac{ \lambda_{\mathbf{\Sigma}_N,l}}{\lambda_{\mathbf\Sigma_N,l}-1}, 0\leq l\leq L-1
\end{equation}
	
According to the definition of $\Sigma_P$ and $\Sigma_N$, we have
\begin{equation}\label{T}
\lambda_{\mathbf{\Sigma}_P,l}-1=(\lambda_{\mathbf{\Sigma}_N,l}-1)\times c
\end{equation}
By calculation, it is easy to find that $\lambda_{\tilde{\mathbf{R}}_z,l}$ and $\lambda_{\hat{\mathbf{R}}_z}$ converge to same limitation when $N\rightarrow \infty$, which is $\lambda_{\mathbf{\Sigma}_P,l}+\cfrac{c\lambda_{\mathbf{\Sigma}_P,l}}{\lambda_{\mathbf{\Sigma}_P,l}-1}$. The other eigenvalues of both matrices converge to Machenko-Pastur distribution, as in \cite{baik} and \cite{wang14}. Thus, the ESD of $\hat{\mathbf{R}}_x$ and $\mathbf{\tilde{R}_z}$ converge to the same distribution asymptotically.
 \end{proof}

\begin{figure}\label{MPlawdensity}\centering
	\includegraphics[scale=0.5]{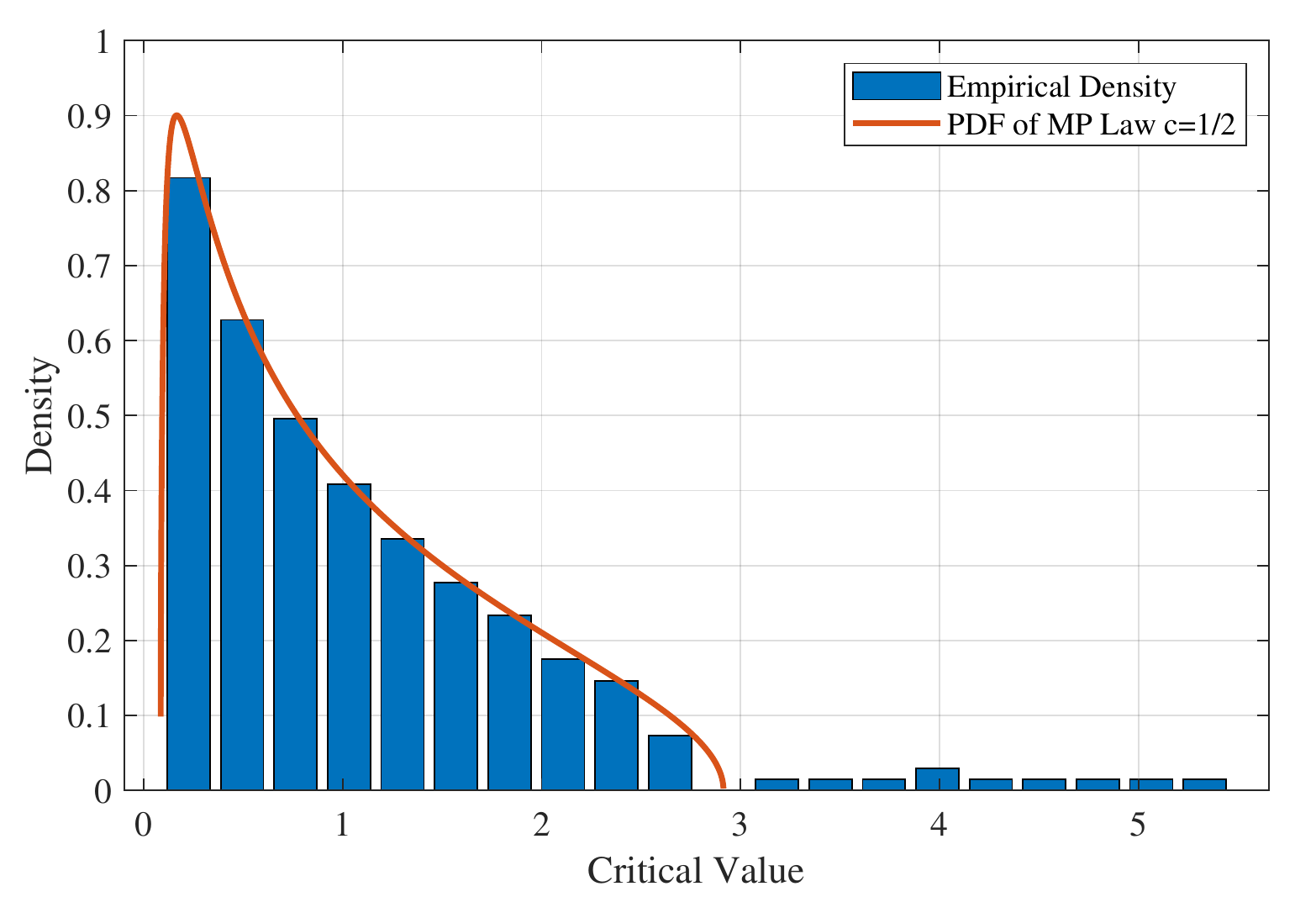}\\
	\caption{The eigenvalues density of $\mathbf{R}_x$ when $\mathcal{H}_1, P=256, N=1024, L=10, \sigma ^2=1/100.$}
\end{figure}

Fig. 2 demonstrates the ESD density of $\mathbf{R}_x$. The largest $L$ eigenvalues are larger than $(1+\sqrt{c})^2$. The rest follow M-P Law, which coincides with our analysis. This distribution is quite different with the eigenvalues of covariance matrix $(\mathbf{T}_P+\mathbf{W}\mathbf{W}^*)$, which would stacked at $1$ and $3.56$.

Combining Theorem 1 and Theorem 2, we have

$Corollary$:
The ESD of SCM \eqref{suma} and $\tilde{\mathbf{R}}_z=\frac{1}{N}\mathbf{\Sigma_P}^{1/2}\mathbf{Z}\mathbf{Z}^*\mathbf{\Sigma_P}^{1/2}$ converge to the same distribution.
	
Based on \cite{wang14}, the LLR $D$ in \eqref{LRTS} converges asymptotically to a Normal distribution $\mathcal{N}(\mu_{D,H_1},\sigma^2_{D,H_1})$ with
\begin{equation}\label{h1mean}
\begin{split}
\mu_{D,H_1}&=P\times(1+\frac{1}{P}L(P\sigma^2+1)-\frac{L}{P}-\\
&\frac{1}{P} L \rm{log}( P\sigma^2+1)-(1-\frac{1}{c})\rm{log}(1-c))+log(1-c)/2\\
\sigma^2=&{-2\rm{log}(1-c)-2c}
\end{split}
\end{equation}

Fig. 3 shows the distribution of $D_n$ under $\mathbf{H}_0$ and $\mathbf{H}_1$, with $P=256, N=512, L=10, SNR=-15.5 dB$. The simulated distributions look in good agreement with the theoretical distributions.

A classical signal detection algorithm is used (algorithm 1 in \cite{lin12}) where we use the metric \eqref{LRTS} as the core of algorithm. 

\begin{center} 
\begin{tabular}{llrr}
	\toprule
	Algorithm1: Signal detection using GLRT \\
	\midrule
	1: At receiver, gets the received data \\from each port synchronously and standardized respectively.\\
	2: Organize the received signals together and form a matrix\\ as \eqref{dh}
	and calculate the sample covariance matrix. \\
	3: Compute $G'$ using the eigenvalues of\\ the sample covariance matrix by \eqref{LRTS}.\\
	4: Determine the probability of false alarm $P_{fa}$ and\\
	find out the threshold $\gamma$ via computations.\\
	5: if $G > B_L$ then\\
	6: signal exists;\\
	7: else\\
	8: signal does not exist.\\
	9: end if\\
	\bottomrule
\end{tabular}
\end{center}

\begin{figure}\label{density}\centering
	\includegraphics[scale=0.4]{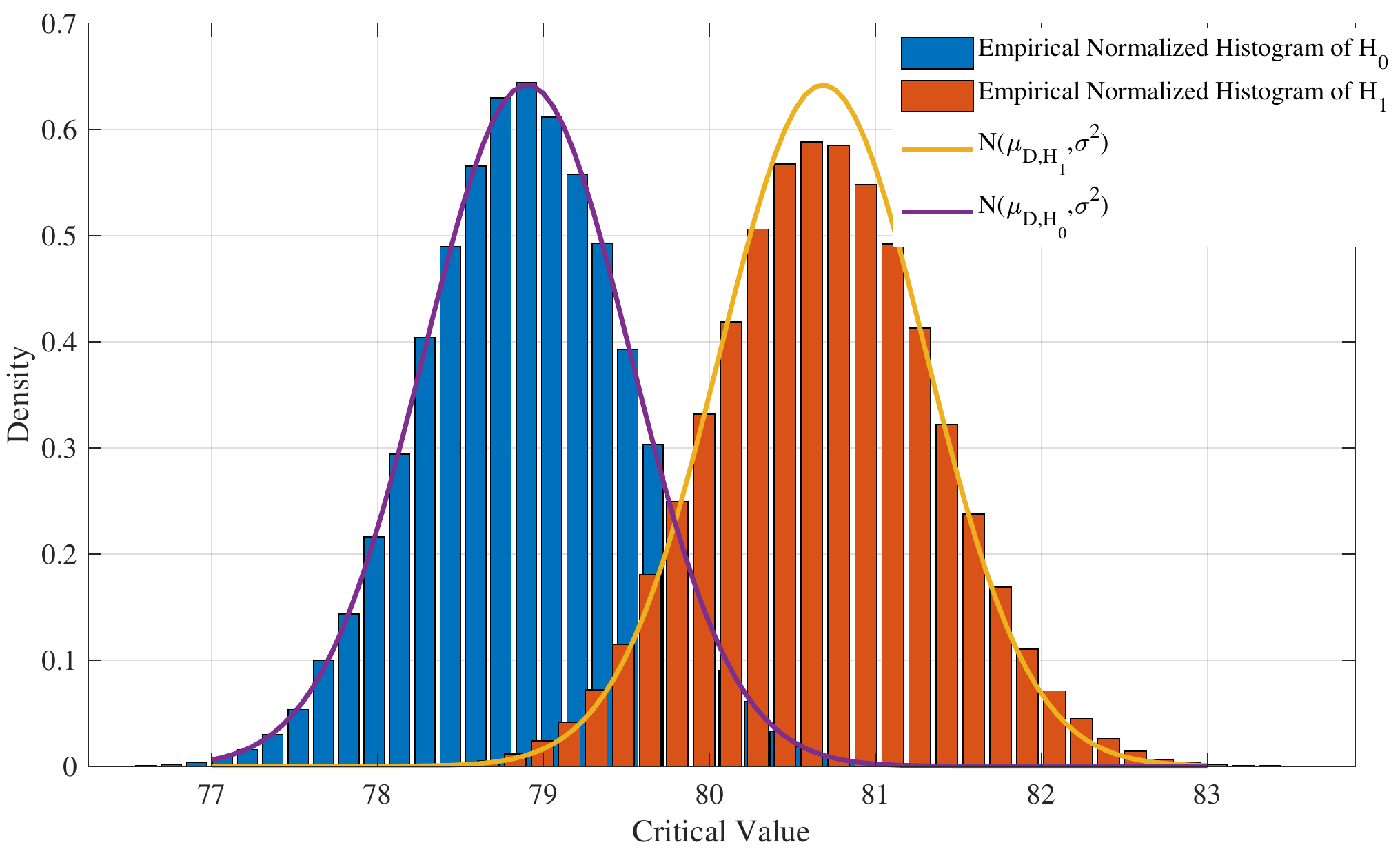}\\
	\caption{The $D_n$ (17) distributions with $\mathcal{H}_1: P=256, N=512, L=10, SNR = -15.5 dB, \mathcal{H}_0: P=256, N=1024$, simulations and theoretical, respectively.}
\end{figure}

As in literature \cite{wang14}, when $P\rightarrow \infty$ and $N\rightarrow\infty$, the variance is the same as that of $\mathcal{H}_0$.

Interestingly, from \eqref{h1mean}, we find that the mean of $D$ is only related with $L$ and $P$, and the variance is only related with $c$ no matter signal present or not.

The choice of threshold $\gamma$ is a compromise between $P_d$ and
$P_{fa}$. The probability of false alarm is
\begin{equation}
\begin{split}
P_{fa}& = P(D >\gamma|H_0)\\
&= P(\frac{D-\mu_{D_0,H_0}}{\sigma}>\gamma)\\
&=Q(\gamma)
\end{split}
\end{equation}
where the Q function is one minus the cumulative distribution function of the standardized normal random variable.

Under the limitation of the false alarming probability $P_{fa}$, the false alarm boundary is set as
\begin{equation}
\gamma=Q^{-1}(P_{fa})
\end{equation}

Under $H_1$, using $\gamma$ as threshold, the distance between $D_n$ and $\gamma$ follows Normal distribution as
\begin{equation}\label{all}
\begin{split}
&G^{'}\rightarrow N(\frac{\mu_{D,H_1}-\mu_{D,H_0}}{\sigma}-\gamma,1)
\end{split}
\end{equation}the theoretical miss probability is 
\begin{equation}\label{misspro}
P_{la}=1-P_d
=Q(\frac{\mu_{D,H_1}-\mu_{D,H_0}}{\sigma}-\gamma)
\end{equation}

\section{Simulations}
\begin{figure}\label{meana}\centering
	\includegraphics[scale=0.45]{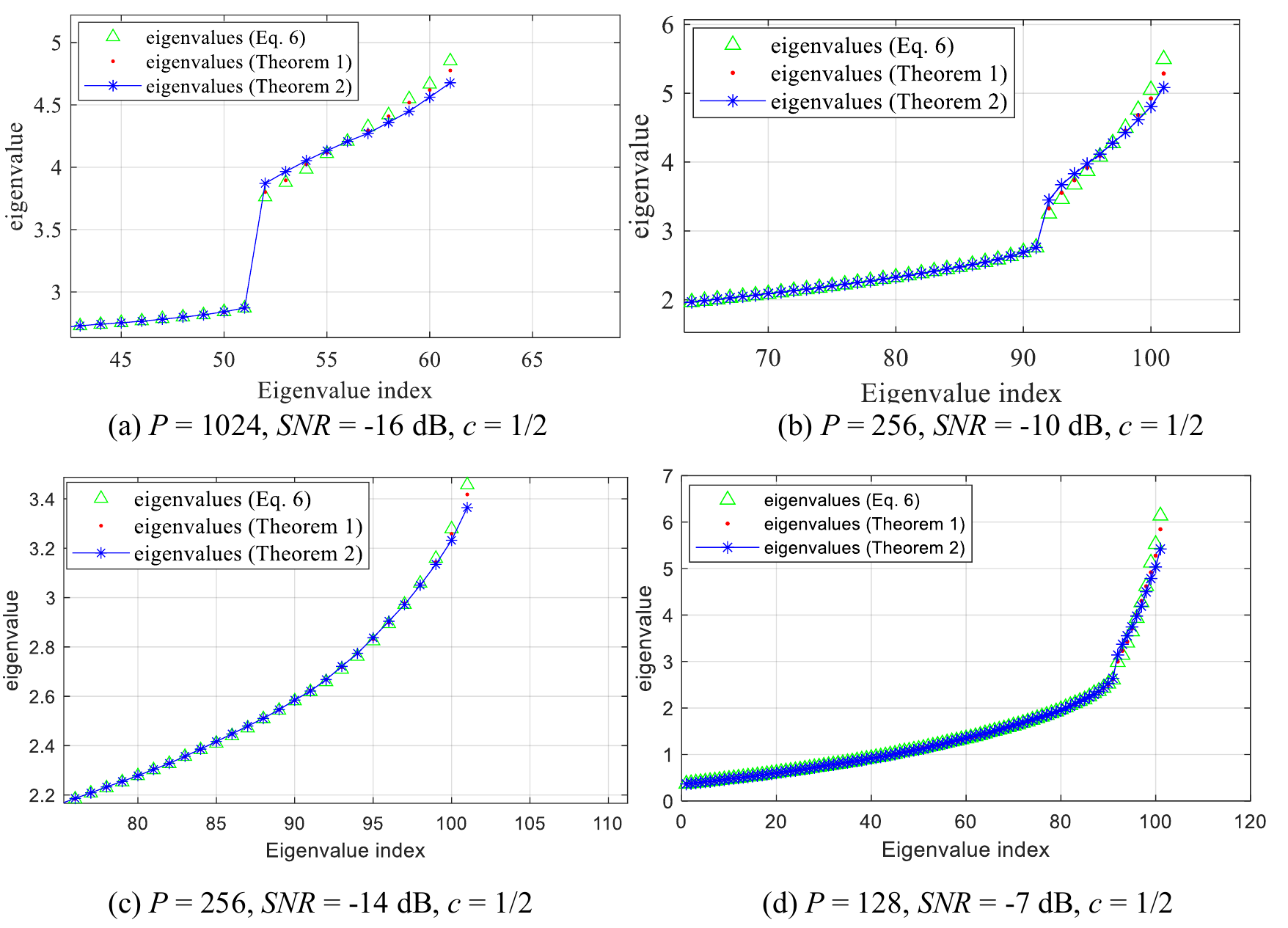}\\
	\caption{The eigenvalues of $\mathbf{R}_x$, $\mathbf{R}_z$ in Th. 1 and $\hat{\mathbf{R}}_z$ in Th. 2  when $\mathcal{H}_1$.}
\end{figure}

 The first simulation is to test the theoretical results about \eqref{suma}, (16) and Theorem 2. A channel with 10 taps is used. We choose $P=128, 256, 1024$, $c=1/2$, $SNR=-7, -10, -16$ dB, respectively. Then from Fig. 4, it is obvious that the eigenvalues of \eqref{suma}, \eqref{mspike} and Theorem 2 coincide with each other, and the ESD of the three matrices are the same. In Fig. 4, the gaps between the largest eigenvalues of the three matrices are smaller than $5\%$ of the value. With the increment of $P$ and decrement of SNR, these gaps are smaller.  

The second part of simulations is to detect LLR $G'$. In Fig. 5, we choose $P=256$ and $N=512, 1024, 2048$, respectively. The $P_{fa}$ is set as $0.05$. Using \eqref{all}, the theoretical miss probability is obtained. The algorithm aided by signal detection simulations is based on Algorithm 1 and \eqref{LRTS}. It is shown that, with the increment of sample number, the detection performance of the algorithm is improved. With the sample doubled, the detection threshold would reduce $2$ dB. In simulations, though the results of theoretical derivations and simulations are very close, the theoretical results look over-optimized the detection by smaller than 0.4 $dB$. Fig. 6 is the ROC (receiver operating characteristic) curve with $L=1, P=256, c=1/2, SNR = -16$ dB and $-15.5$ dB. In Fig. 6, the over-optimization is also found, where the ROC in theoretical is slightly better than empirical.
	
The gaps in Fig. 4, 5 and 6 are brought by the infinite approximation of $P$, as well as the approximation character of variance estimation which is referred from \cite{wang14}. Reviewing Fig. 3, the variance under $\mathcal{H}_1$ looks larger than under $\mathcal{H}_0$, which makes the empirical density of $\mathcal{H}_0$ higher than $\mathcal{H}_1$ but the theoretical variance is still the same. So even the variance analysis of (24) in \cite{wang14} is the most accurate work till now, there still has room to improve. 

\begin{figure}\centering
	\includegraphics[scale=0.4]{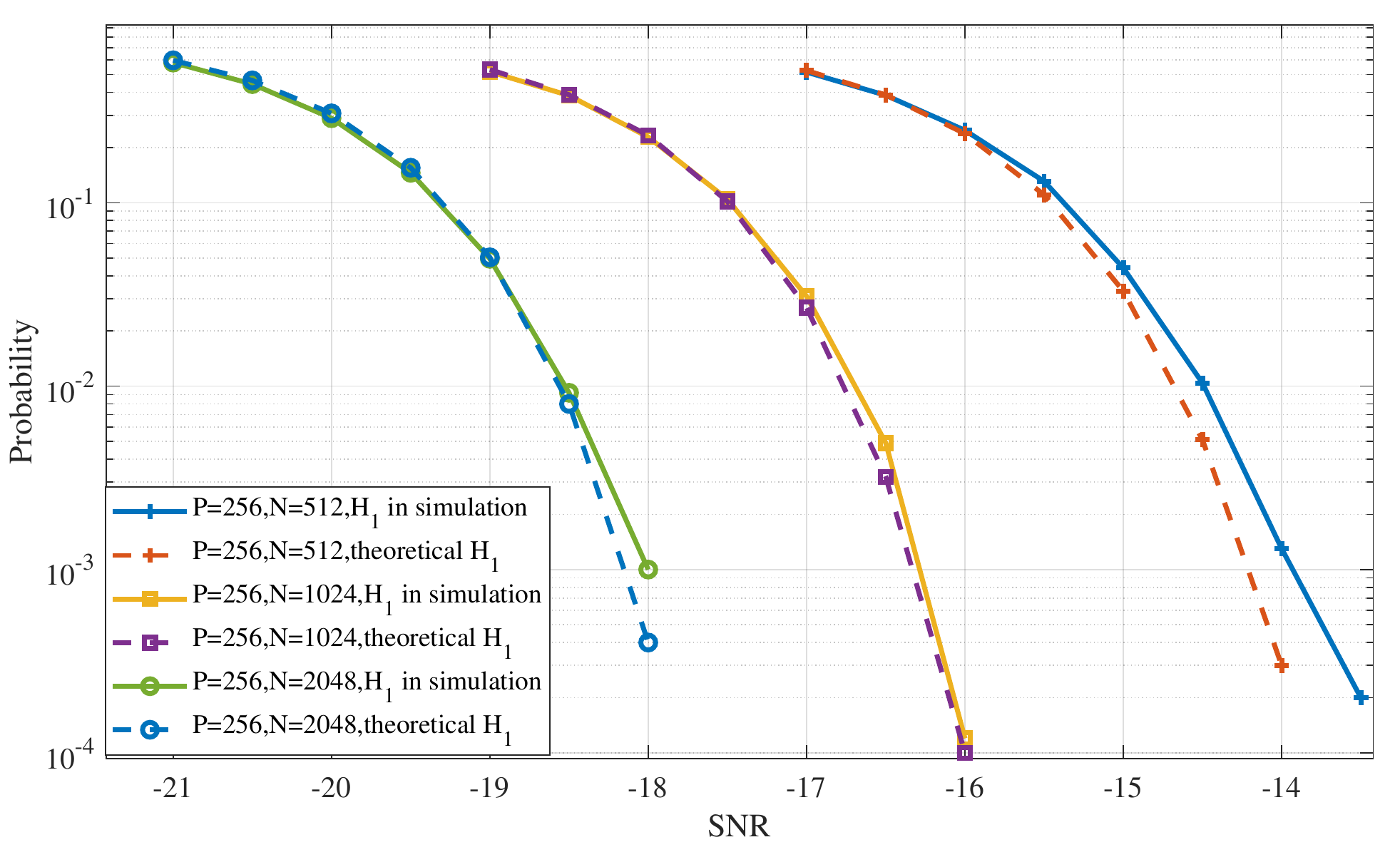}\\
	\caption{Miss probability under different antenna number with different SNR}
\end{figure}

\begin{figure}\centering
	\includegraphics[scale=0.5]{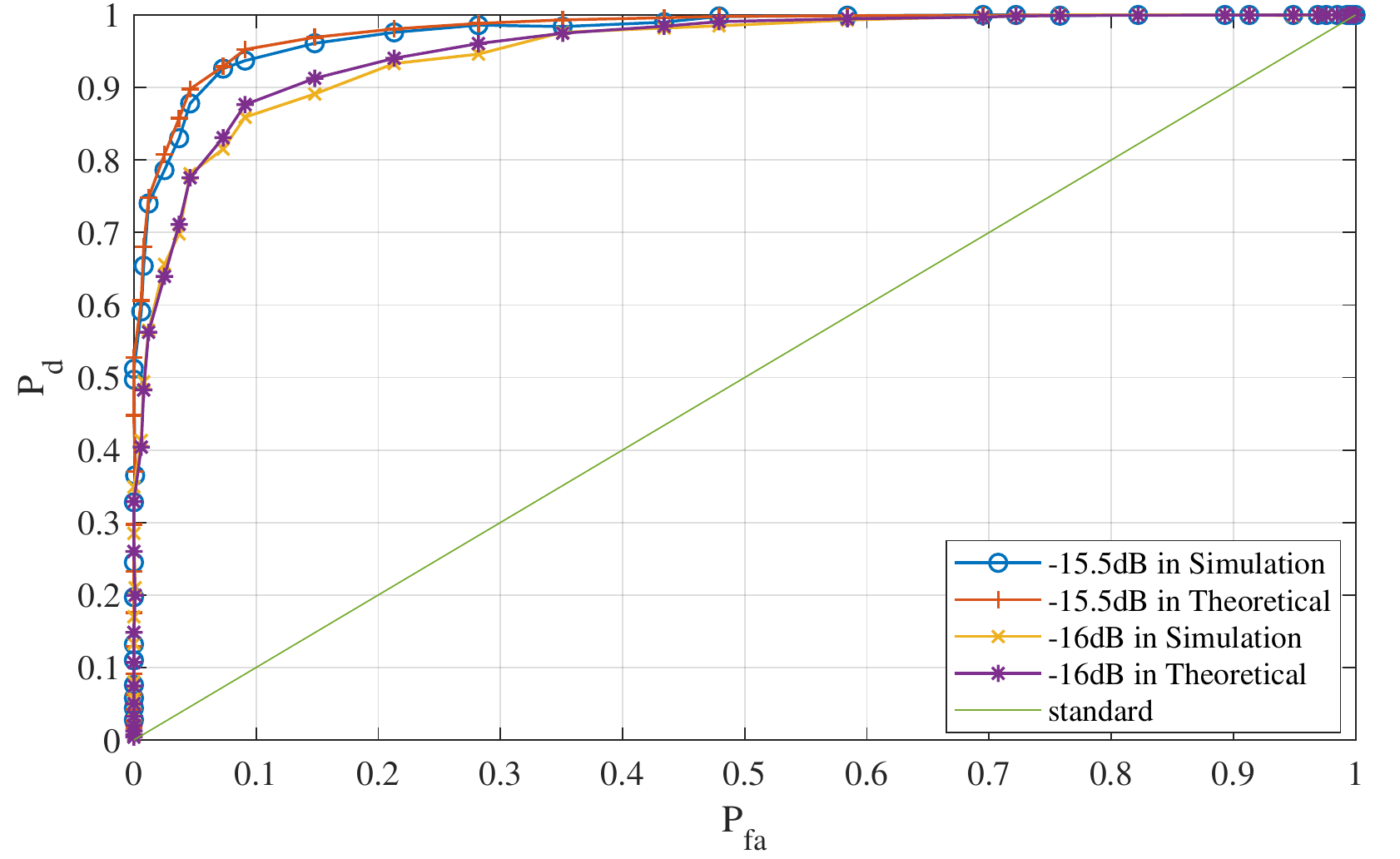}\\
	\caption{ROC curve under $L=1, P=256, c=1/2, SNR = -16dB$ and $-15.5 dB$.}
\end{figure}

\section{Conclusion}

In this paper, we have proposed a detection algorithm in the large antenna array system when the antenna number and the samples number are comparable. Two theorems have been presented to connect the detection with the newest statistical results. Furthermore, we have given a new detection metric which is based on an asymptotic result in RMT. Simulation results have shown that the model agrees with real simulations very well. In further works, we would simplify the complexity using distributed algorithm.
This paper has given results in asymptotic scenario. It is interesting to investigate the bound when $P$ and $N$ are not so large. Furthermore, this results could be extended to other applications easily, such as a multiuser with Rician channel scenario.


\begin{thebibliography}{aa}

\bibitem{ttony}
T. T. Cai, C. H. Zhang, and H. H. Zhou, \textquotedblleft Optimal rates of
convergence for covariance matrix estimation", \emph{The Annals of Statistics},
vol. 38, no. 4, pp. 2118-2144, 2010.
\bibitem{baik}
Baik. Jinho, and J. W. Silverstein, \textquotedblleft Eigenvalues of large sample covariance matrices of spiked population models," \emph{Journal of Multivariate Analysis} vol. 97, no. 6, pp.1382-1408, 2006.

\bibitem{yao18}
Marwa Banna, Jamal Najim, Jianfeng Yao, \textquotedblleft A CLT for linear spectral statistics of large random
information-plus-noise matrices", \emph{Pre-print} 2018.
\bibitem{pastur99}
Marchenko V A, Pastur L A, \textquotedblleft Distribution of eigenvalues for some sets of random matrices", \emph{Mathemat ics of the USSR-Sbornik}, vol. 1, no. 1, pp. 507-536, 1967.
\bibitem{silver}
Silverstein. Jack W, \textquotedblleft The Stieltjes transform and its role in eigenvalue behavior of large dimensional random matrices."  \emph{Random Matrix Theory and Its Applications}, pp.1-25, 2009.
\bibitem{wang14}
Wang. Qinwen, Silverstein, Jack W., Yao. Jian-feng, \textquotedblleft A note on the CLT of the LSS for sample covariance matrix from a spiked population model",  \emph{Journal of Multivariate Analysis}, vol. 130, pp. 194-207, Sep.  2014.
\bibitem{guanpinglu}
Guanping Lu, J. Wu, and R. Ying, \textquotedblleft Filtered multitone transmission with variable subcarrier bandwidths," \emph{IEEE International Conference on Communication Workshop IEEE}, 2015.
\bibitem{Debbah}
Bianchi, Debbah, Maida, Najim, \textquotedblleft Performance of statistical tests for single source detection using random matrix theory", \emph{IEEE Transactions on Information Theory}, vol. 57, no. 4, 2400-2419, 2010.
\bibitem{lin12}
Lin, Feng, Robert Qiu, et al., \textquotedblleft Generalized fmd detection for spectrum sensing under low signal-to-noise ratio",  \emph{Communications Letters, IEEE}, vol. 16, No .5, pp. 604-607, May 2012.
\bibitem{Anderson}
Anderson, Theodore Wilbur, et al., An introduction to multivariate statistical analysis, New York: Wiley, 1958.
\bibitem{Tulino}
A. M. Tulino, S. Verdu, Random Matrix Theory and Wireless Communications, now Publishers, 2004.
\end{thebibliography}
\end{document}